 \newcommand{\ket}[1]{\left|{#1}\right\rangle}
\newtheorem{prop}{Proposition}
\newtheorem{theorem}[prop]{Theorem}
\theoremstyle{remark}
\theoremstyle{definition}
\begin{document}

\title{Efficient semiquantum key distribution}

\author{Ming-Ming Wang}
\email{bluess1982@126.com}
\affiliation{Shaanxi Key Laboratory of Clothing Intelligence, School of Computer Science, Xi'an Polytechnic University, Xi'an 710048, China}
\affiliation{State and Local Joint Engineering Research Center for Advanced Networking \& Intelligent Information Services, School of Computer Science, Xi'an Polytechnic University, Xi'an 710048, China}
\author{Lin-Ming Gong}
\affiliation{Shaanxi Key Laboratory of Clothing Intelligence, School of Computer Science, Xi'an Polytechnic University, Xi'an 710048, China}
\affiliation{State and Local Joint Engineering Research Center for Advanced Networking \& Intelligent Information Services, School of Computer Science, Xi'an Polytechnic University, Xi'an 710048, China}
\author{Lian-He Shao}
\affiliation{Shaanxi Key Laboratory of Clothing Intelligence, School of Computer Science, Xi'an Polytechnic University, Xi'an 710048, China}
\affiliation{State and Local Joint Engineering Research Center for Advanced Networking \& Intelligent Information Services, School of Computer Science, Xi'an Polytechnic University, Xi'an 710048, China}

\date{\today}

\begin{abstract}
Quantum cryptography has attracted much attention in recent years. In most existing quantum cryptographic protocols, players usually need the full quantum power of generating, manipulating or measuring quantum states.
Semiquantum cryptography was proposed to deal with the issue that some players require only partial quantum power, such as preparing or measuring quantum states in the classical basis, which simplifies the implementations of quantum cryptography.
However, the efficiency of the existing semiquantum cryptographic protocols was relatively low from a practical point of view.
In this paper, we devise some new semiquantum key distribution (SQKD) protocols which highly improve the efficiency of the most well-known SQKD protocols [Phys. Rev. Lett. 99, 140501 (2007) \& Phys. Rev. A 79, 052312 (2009)].
By letting players select their actions asymmetrically, the efficiency of our new protocols can be made asymptotically close to 100\%.
Besides, one of our proposed protocols also utilizes the discarded $X$-SIFT bits in the original SQKD protocol, which further improves the efficiency of SQKD.
We prove that the proposed SQKD protocols are completely robust against the most general attack.
\end{abstract}
\keywords{Semiquantum key distribution, efficiency, asymmetric QKD}

\maketitle

\section{Introduction}
Quantum physics has greatly enriched our abilities of computing \cite{Shor97,Grover-225} and communications \cite{BB84,E91}. As one of its application, quantum cryptographic \cite{GisinRibordy-370} has been studied extensively both in theory and in experiment. Conventionally, a quantum cryptographic protocol, such as quantum key distribution \cite{BB84,E91}, requires all of the players equipped with quantum capabilities of preparing, manipulating or measuring qubits. However, it is expensive and inconvenient for all players equipped with full quantum capabilities. In 2007, Boyer et al. \cite{BoyerKenigsberg-1125} proposed a novel idea of quantum key distribution, where one of the player Alice has full quantum capabilities, while the other player Bob is classical. The ``classical" Bob either measures the qubits Alice sent in the classical basis (the $Z$ basis) and resends it in the same state he found, or reflects the qubits without any change.
They called the protocol as  ``quantum key distribution with classical Bob'' or ``semiquantum key distribution (SQKD)''.
Their first SQKD protocol \cite{BoyerKenigsberg-1125} used four quantum states, each of which is randomly prepared in the $Z$ or $X$ basis. The idea was further extended in \cite{BoyerGelles-1126}, where two similar protocols were presented based on measurement-resend and randomization.
Lu and Cai \cite{LuCai-1136} presented a SQKD protocol with classical Alice.
In 2009, Zou et al. \cite{ZouQiu-1127} pointed out that the original SQKD protocol \cite{BoyerKenigsberg-1125} can be simplified by employing less than four states and they proposed five different SQKD protocols.
In 2011, Wang et al. \cite{Wang-1138} proposed an efficient SQKD protocol based on entangled states.
In 2015, Zou et al. \cite{ZouQiu-1124} proposed a SQKD protocol in which the ``classical'' player does not need the measurement capability, and just needs preparing, sending and reordering qubits.
Krawec \cite{Krawec-1134} proposed a mediated SQKD protocol where two ``classical'' players can establish a secret key with the help of a quantum server.
In 2017, Boyer et al. \cite{BoyerKatz-1140} proposed an experimentally feasible SQKD protocol by using a ``controllable mirror.'' And this SQKD protocol is further simplified in \cite{BoyerLiss-1141}.
Recently, Krawec \cite{Krawec-1142} introduced a new SQKD protocol against certain practical attacks.
Zhang et al. \cite{ZhangQiu-1143} proved the security of the single-state SQKD in \cite{ZouQiu-1127} from information theory aspect.
Furthermore, several multiuser SQKD protocols were put forward \cite{LuCai-1136,ZhangGong-1137,Krawec-1134}.
Besides SQKD, other semiquantum cryptographic protocols, such as semiquantum secret sharing \cite{LiChan-1119,WangZhang-1129,Lvzhou-1118,XieLi-1120}, semiquantum information splitting \cite{NieLi-1130}, and semiquantum secure direct communication \cite{ZouQiu-1135} have also been studied.

Security is first requirement for a quantum cryptographic protocol. For proving the security of a QKD protocol, an important step is to show that the protocol is robust \cite{BoyerKenigsberg-1125}. A QKD protocol is robust if any adversarial attempt to learn some information on the key inevitably induces some errors. Boyer et al. \cite{BoyerKenigsberg-1125} divided robustness into three classes: \textit{completely robust}, \textit{partly robust}, and \textit{completely nonrobust}.
Clearly, completely robust protocols are securer than partly robust protocols; partly robust protocols could still be secure, but completely nonrobust protocols are automatically proven insecure.
Boyer et al. \cite{BoyerKenigsberg-1125,BoyerGelles-1126} proved that their protocol is completely robust. Besides, some others SQKD protocols were showed to be completely robust \cite{ZouQiu-1127,ZouQiu-1124,Wang-1138,ZhangGong-1137,Walter-1149,ZhangQiu-1143}.

For practical QKD protocols, another important factor is efficiency, which can be defined as the proportion of the final sifted bits (sifted key) to the whole bits generated by quantum carriers. In the original SQKD protocol \cite{BoyerKenigsberg-1125}, the quantum qubits used are divided into three sets, i.e. the INFO set for generating the final sifted key, the TEST set for confining the error rate on INFO bits and the CTRL set for bounding Eve's information; the INFO set and the TEST set together are called SIFT set. Note that in most of the existing SQKD protocols, the qubits efficiency is relatively low.
For example, in the well-known SQKD protocols in \cite{BoyerKenigsberg-1125,BoyerGelles-1126,ZouQiu-1127}, the ratio of the INFO qubits for generating final sifted key is less than $\frac{1}{8}$ to the total qubits in most of the protocols, while the best efficiency is the protocol 4 in \cite{ZouQiu-1127} which is less than $\frac{1}{4}$. The reason behind the low efficiency of these protocols lies in the fact that the parameters of these protocols are fixed, i.e. Bob's choice for selecting his action either SIFT or CTRL is totally random (with equal probability $\frac{1}{2}$), which is not a good way for improving the efficiency of a SQKD protocol.

To improve the qubits efficiency, we propose some novel asymmetric SQKD protocols following the ideas proposed by Lo et al. \cite{LoChau-1148} for the BB84 QKD protocol.
By asymmetric, we mean that the actions performed by each player are selected with unequal probabilities.
Different from previous SQKD protocols \cite{BoyerKenigsberg-1125,BoyerGelles-1126,ZouQiu-1127}, in our SQKD protocols, Bob chooses his two actions randomly and independently with Alice's choice, but not uniformly. In other words, Bob's SIFT and CTRL actions are chosen with \emph{substantially different} probabilities \cite{LoChau-1148}.
Our protocols differ from the measurement-resend SQKD protocol in Refs. \cite{BoyerKenigsberg-1125,BoyerGelles-1126} in two ways. Firstly, we use different parameters to control the numbers of the INFO, TEST and CTRL bits that used for generating sifted key and detecting eavesdropping. As Alice and Bob are now much more likely to generate INFO bits, the fraction of discarded data is greatly reduced, thus achieving a significant gain in efficiency.
Secondly, the $X$-SIFT bits discarded in \cite{BoyerKenigsberg-1125,BoyerGelles-1126} are now used to generate SIFT bits in one of our new protocols, which further improves the efficiency of the protocol.

The rest of the paper is organized as follows.
In Sect. 2, our improved SQKD protocols are demonstrated. We first show our SQKD protocol 1 using four
different qubits, which is an improved version of the original SQKD protocol. Then, we devise an novel SQKD protocol utilizing the $X$-SIFT bits rather than discarding them. We also present our SQKD protocol 3 using only $\ket{+}$.
And the proof of complete robustness of each protocol is given immediately after the description the protocol.
This paper is further discussed and concluded in Sect. 3.

\section{Efficient asymmetric SQKD protocols without entanglement}

\subsection{Protocol 1: Asymmetric SQKD}

The following is our SQKD protocol using single-qubits as information carriers, which is an improved asymmetric version of the original SQKD \cite{BoyerKenigsberg-1125} where a quantum Alice can generate a secret key with a classical Bob. Our asymmetric SQKD protocol 1 performs as follows.

 (1)   Alice generates a random string $a \in \{0, 1\}^N$ containing nearly $\gamma_1 N$ bits of 0, with $ \frac{1}{2} < \gamma_1 < 1$. She generates and sends $\ket{\phi}_1, \ket{\phi}_2, \cdots, \ket{\phi}_N$ to classical Bob, where $\ket{\phi}_i$ is selected from $\{ \ket{0}$,   $\ket{1} \}$ if $a_i = 0$, or from $ \{\ket{+} = \frac{1}{\sqrt{2}} (\ket{0}+\ket{1}), \ket{-} = \frac{1}{\sqrt{2}} (\ket{0}-\ket{1}) \}$  if $a_i = 1$.

  (2)  Bob generates a random string $b = \{0, 1\}^N$ containing nearly $\gamma_2 N$ bits of 0, with $ \frac{1}{2} < \gamma_2 < 1$.
  When the $i$th qubit comes, he measures it in the $Z$ basis and resends it in the same state he found (SIFT it) if  $b_i = 0$, or reflects it back without any modification (CTRL it) if $b_i = 1$.

  (3) Alice measures each CTRL qubit in the basis she sent it.

  (4) Alice publishes $a$ and Bob publishes $b$.

  (5) Alice checks the error rate on the CTRL bits. Alice and Bob abort the protocol if the error rate is higher than the predefined threshold $P_t$.

  (6) Alice and Bob randomly select $\xi \gamma_1 \gamma_2 N $ measurement results of the SIFT bits to be the TEST bits, where
  $0<\xi<\frac{1}{2}$ is the proportion of TEST bits to the SIFT bits. They check the error rate on the TEST bits. Alice and Bob abort the protocol if the error rate is higher than $P_t$.

  (7) Alice and Bob set the remaining bits of SIFT bits as the INFO bits.

  (8) Alice and Bob perform classical post-processing procedures, such as information reconciliation and privacy amplification, to produce the final key from the INFO bits, similar to the BB84 QKD protocol \cite{GisinRibordy-370}

\begin{theorem}
The SQKD protocol 1 is completely robust.
\end{theorem}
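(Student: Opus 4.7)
The plan is to follow the structure of the robustness proof for the original SQKD protocol~\cite{BoyerKenigsberg-1125}, adapted to the asymmetric choice of parameters $\gamma_1,\gamma_2,\xi$. The key observation up front is that complete robustness depends only on whether each category of check bit is sampled with nonzero probability; since $\tfrac{1}{2}<\gamma_1,\gamma_2<1$ and $0<\xi<\tfrac{1}{2}$, the $Z$-CTRL positions from $\ket{0},\ket{1}$, the $X$-CTRL positions from $\ket{+},\ket{-}$, and the $Z$-TEST bits all occur with positive frequency, so Eve must avoid errors on every one of them simultaneously.

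First I would model Eve's most general coherent attack on a single round as a pair of unitaries $U_E$ and $U_F$, applied on the forward (Alice to Bob) and return (Bob to Alice) legs respectively, each acting jointly on the transmitted qubit and a fresh ancilla prepared in $\ket{E}$. Writing $U_E\ket{0,E}=\ket{0,E_{00}}+\ket{1,E_{01}}$ and $U_E\ket{1,E}=\ket{0,E_{10}}+\ket{1,E_{11}}$, I would compute, for each of Alice's four preparations and Bob's two actions, the joint state of qubit and ancilla after $U_F$. Demanding that the $Z$-TEST checks never fail forces $\ket{E_{01}}=\ket{E_{10}}=0$, so $U_E$ reduces to $\ket{0,E}\mapsto\ket{0,E_{00}}$ and $\ket{1,E}\mapsto\ket{1,E_{11}}$ with $\|E_{00}\|=\|E_{11}\|=1$; plugging this into the $Z$-CTRL conditions further forces $U_F$ to act as $\ket{0,E_{00}}\mapsto\ket{0,F_{00}}$ and $\ket{1,E_{11}}\mapsto\ket{1,F_{11}}$ for some normalized ancilla states $\ket{F_{00}},\ket{F_{11}}$.

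The main technical step is the $X$-CTRL calculation: requiring the returning state $\tfrac{1}{\sqrt{2}}(\ket{0,F_{00}}\pm\ket{1,F_{11}})$ to factor as $\ket{\pm}$ tensored with an ancilla state forces $\ket{F_{00}}=\ket{F_{11}}$, exactly as in~\cite{BoyerKenigsberg-1125,BoyerGelles-1126,ZouQiu-1127}. This is the only point where classical Bob's inability to measure in the $X$ basis has to be compensated by Alice's end-to-end check, and it is where I expect the algebra to require the most care. Once it is in place, Eve's final ancilla state is independent of Alice's $Z$-basis bit, so any attack that could leak information about the INFO bits must disturb at least one of the three types of check positions and is therefore detected with positive probability. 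The asymmetric parameters $\gamma_1,\gamma_2,\xi$ play no role in the algebraic argument and enter only through the sampling probabilities, so the proof structure of the original symmetric SQKD carries through essentially verbatim to establish complete robustness.
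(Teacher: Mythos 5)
Your argument is correct in substance, but it takes a different route from the paper: the paper's entire proof is a reduction by citation --- it observes that Protocol 1 uses exactly the same detection mechanism (CTRL qubits reflected and re-measured by Alice in the basis she sent, plus TEST comparison of Bob's $Z$ outcomes) as the measure--resend protocol, differing only in the sampling proportions $\gamma_1,\gamma_2,\xi$, and then invokes Theorem 3 of \cite{BoyerGelles-1126} directly. You instead reconstruct the underlying algebra, which has the virtue of making explicit the one point the paper leaves implicit: complete robustness is a zero-error property, so it is insensitive to biasing the sampling as long as every check category ($Z$-TEST, $Z$-CTRL, $X$-CTRL) occurs with positive probability, which $\tfrac12<\gamma_1,\gamma_2<1$ and $0<\xi<\tfrac12$ guarantee. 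Your chain of deductions --- $\ket{E_{01}}=\ket{E_{10}}=0$ from the $Z$-TEST condition, diagonality of $U_F$ from the $Z$-CTRL condition, $\ket{F_{00}}=\ket{F_{11}}$ from the $X$-CTRL condition, hence a probe state independent of Alice's $Z$-basis bit --- is the standard and correct calculation. Two caveats: first, $U_F$ must act on the \emph{same} probe that $U_E$ used (your displayed maps do this, but the phrase ``each acting jointly on the transmitted qubit and a fresh ancilla'' should be repaired); second, as written you treat only an attack that addresses each round independently, whereas ``the most general attack'' --- and the theorem the paper cites --- allows a single joint unitary acting on all $N$ transiting qubits and one global probe, so to fully match the paper's claim you would need either to redo the computation with global operators as in \cite{BoyerGelles-1126} or, like the paper, simply inherit that theorem after noting that the detection mechanism is unchanged.
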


\begin{proof}
Note that our SQKD protocol 1 use the same method as is used in the original measure-resend SQKD \cite{BoyerGelles-1126} to detect Eve's attack. The only difference between these two protocols lies in the proportion of the CTRL and TEST bits used for eavesdropping detection.
The proof of complete robustness of the SQKD protocol 1 can be concluded directly from the Theorem 3 in  \cite{BoyerGelles-1126}.
\end{proof}

\subsection{Protocol 2: Asymmetric SQKD using $X$-SIFT}

Note that the $X$-SIFT bits have been discarded in \cite{BoyerKenigsberg-1125}. In our SQKD protocol 2, we utilizes the $X$-SIFT bits for generating TEST or INFO bits. Our SQKD protocol 2 performs as follows.

 (1)   Alice generates a uniformly random string $a \in \{0, 1\}^N$, where $N = (\kappa+\tau+\lambda)(1+\delta)$ with $(\kappa+\tau)$ is the length of the SIFT bits that includes the INFO bits with length $\kappa$ and the TEST bits with length $\tau$, and $\lambda$ is the length of the CTRL string, while $\delta > 0 $ is a fixed parameter that is the same as the original SQKD protocol \cite{BoyerKenigsberg-1125}.
 Alice generates and sends $\ket{\phi}_1, \ket{\phi}_2, \cdots, \ket{\phi}_N$, where $\ket{\phi}_i$ is selected from $\{ \ket{0}$,   $\ket{1} \}$ if $a_i = 0$, or from $ \{\ket{+} = \frac{1}{\sqrt{2}} (\ket{0}+\ket{1}), \ket{-} = \frac{1}{\sqrt{2}} (\ket{0}-\ket{1}) \}$  if $a_i = 1$. Alice sends a qubit only after receiving the previous one from Bob.

  (2)  Bob generates a random string $b = \{0, 1\}^N$ containing nearly $(\kappa+\tau)(1+\delta)$ bits of 0.
  When the $i$th qubit comes, he measures it in the $Z$ basis and resends it in the same state he found (SIFT it) if  $b_i = 0$, or reflects it back without any modification (CTRL it) if $b_i = 1$.

  (3) Alice uses an $N$-qubit register to save all qubits coming back from Bob.

  (4) Bob announces $b$ after Alice receives the last qubit.

  (5) Alice measures each CTRL qubit in the basis she sent it and measures each SIFT qubit in the $Z$ basis. Then Alice checks the error rate on the CTRL bits. Alice and Bob abort the protocol if the error rate is higher than the predefined threshold $P_t$.

  (6) Alice and Bob randomly select $\tau$ measurement results of the SIFT bits to be the TEST bits. They check the error rate on the TEST bits. Alice and Bob abort the protocol if the error rate is higher than $P_t$.

  (7) Alice and Bob select the first $\kappa$ remaining bits of SIFT bits as the INFO bits.

  (8) Alice and Bob perform classical post-processing procedures, such as information reconciliation and privacy amplification, to produce the final key from the INFO bits, similar to the BB84 QKD protocol \cite{GisinRibordy-370}

\begin{theorem}
The SQKD protocol 2 is completely robust.
\end{theorem}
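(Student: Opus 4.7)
The plan is to follow the framework of Boyer et al.\ \cite{BoyerKenigsberg-1125,BoyerGelles-1126} for proving complete robustness of SQKD protocols, adapted to the novel feature that protocol~2 keeps the $X$-SIFT bits rather than discarding them. As in \cite{BoyerGelles-1126}, Eve's most general attack is modeled by a pair of unitaries $(U_E, U_F)$ applied jointly on her ancilla $E$ and the forward (Alice$\to$Bob) or backward (Bob$\to$Alice) qubit, respectively. The goal is to show that if every CTRL bit and every TEST bit (including both the $Z$-prepared and the $X$-prepared SIFT bits) is error-free, then Eve's final ancilla state is independent of Alice's raw key bits, and hence she obtains no information about the key.

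First, I would invoke the CTRL analysis essentially unchanged from \cite{BoyerGelles-1126}. Because Alice prepares her qubit uniformly in either the $Z$ or the $X$ basis and Bob's CTRL action is a plain reflection, the no-error condition on reflected runs forces $U_F U_E \ket{x}\ket{E} = \ket{x}\ket{F_x}$ for every $x \in \{0,1,+,-\}$. Writing $U_E \ket{x}\ket{E} = \sum_{y\in\{0,1\}} \ket{y}\ket{E_{xy}}$ and expanding $\ket{\pm}$ in the computational basis, I would collect the standard linear relations between the ancilla components $\ket{E_{xy}}$ and $\ket{F_x}$.

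The $Z$-SIFT TEST constraint is then treated as in the original argument: when $a_i = 0$ and Bob measures and resends in $Z$, the zero-error requirement, together with the CTRL relations, forces $\ket{F_0}=\ket{F_1}$, so Eve cannot distinguish the two $Z$-prepared branches. The genuinely new ingredient is the $X$-SIFT TEST constraint. When $a_i=1$, Bob's $Z$-measurement collapses Alice's $X$-prepared qubit, he resends and announces the outcome, and Alice's subsequent $Z$-measurement (performed only after she has received the whole block and Bob has disclosed $b$) must match his announcement. I would propagate $U_E\ket{\pm}\ket{E}$ through Bob's measurement-resend and $U_F$ branch-by-branch, and impose that Alice's $Z$-outcome equals Bob's announced bit with probability one. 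Combined with the CTRL relations, this should force the ancilla components associated with the $\ket{+}$ and $\ket{-}$ inputs to be parallel and, moreover, to coincide with the common ancilla state already identified on the $Z$-prepared branches.

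The main obstacle I anticipate is this final decoupling step: verifying that the $X$-SIFT TEST condition, together with CTRL and the $Z$-SIFT TEST, is actually strong enough to eliminate Eve's information on the \emph{new} sifted bits that originate from the $X$-prepared SIFT runs (which are discarded in \cite{BoyerKenigsberg-1125,BoyerGelles-1126} and therefore not constrained by the original proofs). Once this is established, the conclusion is essentially the contrapositive of the argument in \cite{BoyerGelles-1126}: any nontrivial correlation between Eve's ancilla and the sifted key necessarily induces a detectable disturbance on some CTRL or TEST bit, so protocol~2 is completely robust.
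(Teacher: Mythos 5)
Your proposal is correct in substance, but it takes a genuinely different route from the paper. The paper's own proof is a two-sentence reduction: it observes that protocol 2 shares the essential structure of protocol 2 of Zou et al. (Alice sends one qubit only after receiving the previous one and defers her measurements until Bob announces $b$), and then imports the complete-robustness proof of that protocol, which already rests on the joint use of $\ket{+}$ and $\ket{0}$ and already treats the case where Bob's $Z$-measurement of an $X$-prepared qubit produces a key bit. You instead redo the Boyer et al.\ $(U_E,U_F)$ attack analysis from scratch and add the $X$-SIFT TEST as an explicit new constraint; this is more self-contained and confronts the novel feature (keeping the $X$-SIFT bits) head-on, which the paper's citation glosses over. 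Two remarks. First, the ``main obstacle'' you anticipate dissolves: once CTRL forces $U_F U_E\ket{x}\ket{E}=\ket{x}\ket{F}$ with a \emph{common} $\ket{F}$ for all four $x$ (by linearity across the two bases), and the $Z$-SIFT TEST forces $U_E\ket{a}\ket{E}=\ket{a}\ket{E_{aa}}$ and $U_F\ket{a}\ket{E_{aa}}=\ket{a}\ket{F}$ for $a\in\{0,1\}$, linearity gives $U_E\ket{\pm}\ket{E}=\tfrac{1}{\sqrt{2}}(\ket{0}\ket{E_{00}}\pm\ket{1}\ket{E_{11}})$, so after Bob's measure-and-resend Eve's ancilla ends in $\ket{F}$ on both branches; the $X$-SIFT TEST is then automatically error-free and the decoupling of Eve from the new $X$-SIFT key bits comes for free---no additional lemma is needed. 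Second, your per-round $(U_E,U_F)$ model captures the most general attack only because of the one-qubit-at-a-time discipline in step (1); this is precisely the hypothesis the paper leans on when invoking Zou et al.\ (cf.\ the Boyer--Mor comment), so you should state it explicitly rather than leave it implicit. (Minor slip: Bob announces his measurement outcomes only for the TEST subset, not on every $X$-SIFT run; this does not affect the argument.)
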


\begin{proof}
Note that our SQKD protocol 2 is similar to the SQKD protocol 2 in Ref. \cite{ZouQiu-1127} since Alice in both protocols only sends one-qubit each time.
The proof of complete robustness of the SQKD protocol 2 in \cite{ZouQiu-1127} (also see \cite{BoyerMor-1147,ZouQiu-1146}) lies on the usage of $\ket{+}$ and $\ket{0}$, which implies the complete robustness of our SQKD protocol 2.
\end{proof}

\subsection{Protocol 3: Asymmetric SQKD using only $\ket{+}$}

In Ref. \cite{ZouQiu-1127}, Zou et al. also proposed an interesting SQKD protocol with only $\ket{+}$ qubits. Here, we simply modify this protocol to an asymmetric one with high efficiency.
Our SQKD protocol 4 is performed as follows.

 (1)   Alice sets $N = (\kappa+\tau+\lambda)(1+\delta)$, with the parameters have the same meaning as our Protocol 2. Alice generates and sends $N$ qubits $\ket{+}^{\otimes N}$. Alice sends a qubit only after receiving the previous one from Bob.

  Steps (2)-(8) are the same as our Protocol 2.

\begin{theorem}
The SQKD protocol 3 is completely robust.
\end{theorem}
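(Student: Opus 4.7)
The plan is to mirror the proof of Theorem 2 and reduce the complete robustness of Protocol 3 to that of the single-state SQKD protocol of Zou et al.\ in \cite{ZouQiu-1127}, which already uses only $\ket{+}$ as Alice's preparation and has been shown completely robust there and in \cite{BoyerMor-1147,ZouQiu-1146,ZhangQiu-1143}. Our Protocol 3 differs from that protocol only in (i) the asymmetric SIFT/CTRL probabilities inherited from the parameter $\delta$ and (ii) the partition of the SIFT outcomes into INFO and TEST. Since complete robustness is the qualitative statement that any attack yielding Eve nonzero information must induce a nonzero probability of detection, the particular values of these weights are irrelevant as long as each class is populated with positive probability.

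To make the reduction explicit I would first model Eve's most general attack by two unitaries $U_F$ and $U_R$ acting jointly on the travelling qubit and a fixed ancilla $\ket{E}$ on the Alice-to-Bob and Bob-to-Alice legs respectively. Writing $U_F(\ket{+}\otimes\ket{E}) = \ket{0}\otimes\ket{E_0} + \ket{1}\otimes\ket{E_1}$ and also expanding the same vector in the $X$ basis, the CTRL check imposes that the $\ket{-}$ component of $U_R U_F(\ket{+}\otimes\ket{E})$ vanish, while the TEST check imposes that $U_R$ send Bob's post-measurement state $\ket{0}\otimes\ket{E_0}$ into a vector with no $\ket{1}$ component on Alice's side, and symmetrically $\ket{1}\otimes\ket{E_1}$ into a vector with no $\ket{0}$ component.

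These are exactly the two algebraic constraints exploited in the completeness argument of \cite{ZouQiu-1127}; imposed simultaneously they force Eve's ancilla to decouple from Alice's and Bob's shared $Z$-basis outcome, so Eve learns nothing about the sifted key whenever both checks pass deterministically. Any attack that grants Eve nonzero information about the INFO bits must therefore violate at least one constraint, producing a nonzero error rate on either CTRL or TEST, which is the content of complete robustness.

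The step I expect to require the most care is precisely this reduction, namely verifying that neither the asymmetric SIFT/CTRL probabilities nor the subsequent INFO/TEST partition alters the underlying algebra. I anticipate this to be essentially immediate: for any $\delta > 0$ both the CTRL and TEST subsets are populated with positive probability, so the constraints above are enforced on the full Hilbert space of each round, and the conclusion of \cite{ZouQiu-1127} transfers verbatim to Protocol 3.
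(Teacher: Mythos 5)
Your proposal takes essentially the same route as the paper: the paper's proof is a one-line reduction to the complete robustness of the single-state ($\ket{+}$-only) protocol~4 of Zou et al.\ \cite{ZouQiu-1127} (with \cite{BoyerMor-1147,ZouQiu-1146} cited for the corrected argument), which is exactly the reduction you carry out. Your version is in fact more careful than the paper's, since you spell out the attack model ($U_F$, $U_R$, the CTRL and TEST constraints) and explicitly justify why the asymmetric SIFT/CTRL probabilities and the INFO/TEST partition leave the algebra unchanged --- a point the paper leaves implicit.
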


\begin{proof}
The complete robustness of our SQKD protocol 3 can be conclude directly from the proof of complete robustness of the SQKD protocol 4 in \cite{ZouQiu-1127} (also see \cite{BoyerMor-1147,ZouQiu-1146}).
\end{proof}

\section{Discussions and Conclusions}

A key point of asymmetric SQKD lies in the fact that the security of SQKD protocols can be guaranteed
 if we could set the numbers of the CTRL and TEST bits to reach some certain thresholds, and it unnecessary to keep a fixed ratio of these test bits with the increase of total generated classical bits.
The proportion of different types of bits generated in our SQKD protocol 1 is described in Table \ref{tab:1}, while our SQKD protocols 2 \& 3 are in Table \ref{tab:2}. The efficiency of our new protocols can be made asymptotically close to 100\%, i.e. these SQKD protocols can be made arbitrarily efficient.

\begin{table}
\caption{Proportion of bits used in Protocol 1, where $\gamma$ is the probability that Alice prepares states in the $Z$ basis and Bob chooses to SIFT, while $\xi$ is proportion the TEST bits to the SIFT bits.}
\label{tab:1}
\begin{center}
\begin{tabular}{lll}
\hline\noalign{\smallskip}
 Name & Proportion &  Usage\\\hline
 $Z$-SIFT & ~$\gamma_1\gamma_2 $  & INFO \& TEST bits \\
                   &~~~ $\xi \gamma_1\gamma_2 $ & ~~~INFO  bits \\
                   &~~~ $(1-\xi) \gamma_1\gamma_2 $ &~~~TEST bits  \\
 $X$-SIFT & ~$(1-\gamma_1)\gamma_2 $&Discard \\
 $Z$-CTRL &~$\gamma_1(1-\gamma_2) $ & CTRL bits\\
 $X$-CTRL &~ $(1-\gamma_1)(1-\gamma_2) $          & CTRL bits \\
\hline\noalign{\smallskip}
\end{tabular}
\end{center}
\end{table}

\begin{table}
\caption{Proportion of bits used in Protocol 2 and Protocol 3, where $\kappa$ is the number of INFO bits, $\tau$ is the number of TEST bits and $\lambda$ is the number of the CTRL bits.}
\label{tab:2}
\begin{tabular}{lc}
\hline\noalign{\smallskip}
   Name & Proportion\\
\noalign{\smallskip}\hline \noalign{\smallskip}
CTRL &  $\approx \frac{\lambda}{\kappa+\tau+\lambda}$ \\
SIFT  &  $\approx \frac{\kappa+\tau}{\kappa+\tau+\lambda}$ \\
~~TEST   &  $\approx \frac{\tau}{\kappa+\tau+\lambda}$ \\
~~INFO   &  $\approx \frac{\kappa}{\kappa+\tau+\lambda}$ \\
\noalign{\smallskip}\hline
\end{tabular}
\end{table}

Besides, in the original SQKD protocol \cite{BoyerKenigsberg-1125,BoyerGelles-1126}, nearly $\frac{1}{4}$ of all generated bits, the $X$-SIFT bits, have been discarded. In our SQKD protocol 2, we make these $X$-SIFT bits usable for generating SIFT bits. We use three different parameter $\kappa$, $\tau$ and $\lambda$ to control the bits ratios used for the INFO, TEST and CTRL bits.
If we let $\kappa=\frac{N}{4}$, $\tau=\frac{N}{4}$ and $\lambda=\frac{N}{2}$, our protocol 2 is similar to the measure-resend SQKD protocol in \cite{BoyerKenigsberg-1125,BoyerGelles-1126} except that $X$-SIFT bits has been used, while if we set $\kappa > \frac{N}{4}$, our protocol 2 becomes even more efficient.
Similar results can also be obtained from the other two protocols.

In summary, we have proposed some improved SQKD protocols against the most well-known SQKD protocols in \cite{BoyerKenigsberg-1125,BoyerGelles-1126,ZouQiu-1124} with efficiency of these improved protocols asymptotically close to 100\% based on the idea of asymmetric QKD \cite{LoChau-1148}.
Firstly, we introduce asymmetric SQKD protocols with players Alice and Bob perform their actions in a random but non uniform manner, which makes most of the qubits be used for generating sifted key, but the others for generating tested key is sufficient to guarantee the security of the protocol.
Secondly, we show that the $X$-SIFT qubits discarded in previous SQKD protocols can be used for generating sifted key and tested key.
It is clear that the asymmetric method can also be used directly to other existing SQKD protocols.
We hope this research would be helpful for practical implementations of semiquantum cryptographic protocols.

\section*{Acknowledgements}
\noindent
This project was supported by the National Natural Science Foundation of China (Grant No. 61601358).


\begin{thebibliography}{10}

\bibitem{Shor97}
P.~W. Shor, Polynomial-time algorithms for prime factorization and discrete
  logarithms on a quantum computer, SIAM J. on Computing 26~(5) (1997)
  1484--1509.

\bibitem{Grover-225}
L.~K. Grover, Quantum mechanics helps in searching for a needle in a haystack,
  Phys. Rev. Lett. 79~(2) (1997) 325.

\bibitem{BB84}
C.~H. Bennett, G.~Brassard, Quantum cryptography: Public key distribution and
  coin tossing, in: Proceedings of IEEE International Conference on Computers
  Systems and Signal Processing, Bangalore, India, 1984, pp.
  175--179.

\bibitem{E91}
A.~K. Ekert, Quantum cryptography based on bell's theorem, Phys. Rev. Lett.
  67~(6) (1991) 661--663.

\bibitem{GisinRibordy-370}
N.~Gisin, G.~Ribordy, W.~Tittel, H.~Zbinden, Quantum cryptography, Rev. Mod.
  Phys. 74 (2002) 145--195.

\bibitem{BoyerKenigsberg-1125}
M.~Boyer, D.~Kenigsberg, T.~Mor, Quantum key distribution with classical bob,
  Phys. Rev. Lett. 99~(14) (2007) 140501.

\bibitem{BoyerGelles-1126}
M.~Boyer, R.~Gelles, D.~Kenigsberg, T.~Mor, Semiquantum key distribution, Phys.
  Rev. A 79~(3) (2009) 032341.

\bibitem{LuCai-1136}
H.~Lu, Q.-Y. Cai, Quantum key distribution with classical alice, Int. J.
  Quantum Inf. 06~(06) (2008) 1195--1202.

\bibitem{ZouQiu-1127}
X.~Zou, D.~Qiu, L.~Li, L.~Wu, L.~Li, Semiquantum-key distribution using less
  than four quantum states, Phys. Rev. A 79~(5) (2009) 052312.

\bibitem{Wang-1138}
J.~Wang, S.~Zhang, Q.~Zhang, C.-J. Tang, Semiquantum key distribution using
  entangled states, Chin. Phys. Lett. 28~(10) (2011) 100301.

\bibitem{ZouQiu-1124}
X.~Zou, D.~Qiu, S.~Zhang, P.~Mateus, Semiquantum key distribution without
  invoking the classical party’s measurement capability, Quantum Inf.
  Process. 14~(8) (2015) 2981--2996.

\bibitem{Krawec-1134}
W.~O. Krawec, Mediated semi-quantum key distribution, Phys. Rev. A 91~(3)
  (2015) 032323.

\bibitem{BoyerKatz-1140}
M.~Boyer, M.~Katz, R.~Liss, T.~Mor, Experimentally feasible protocol for
  semiquantum key distribution, Phys. Rev. A 96~(6).

\bibitem{BoyerLiss-1141}
M.~Boyer, R.~Liss, T.~Mor, Attacks against a simplified experimentally feasible
  semiquantum key distribution protocol, Entropy 20~(7) (2018) 536.

\bibitem{Krawec-1142}
W.~O. Krawec, Practical security of semi-quantum key distribution, in: Proc.
  SPIE 10660, Quantum Information Science, Sensing, and Computation X, Orlando,
  Florida, United States, 2018, p. 1066009.

\bibitem{ZhangQiu-1143}
W.~Zhang, D.~Qiu, P.~Mateus, Security of a single-state semi-quantum key
  distribution protocol, Quantum Inf. Process. 17~(6) (2018) 135.

\bibitem{ZhangGong-1137}
X.-Z. Zhang, W.-G. Gong, Y.-G. Tan, Z.-Z. Ren, X.-T. Guo, Quantum key
  distribution series network protocol with m-classical bobs, Chin. Phys. B
  18~(6) (2009) 2143–2148.

\bibitem{LiChan-1119}
Q.~Li, W.~H. Chan, D.-Y. Long, Semiquantum secret sharing using entangled
  states, Phys. Rev. A 82~(2) (2010) 022303.

\bibitem{WangZhang-1129}
J.~Wang, S.~Zhang, Q.~Zhang, C.-J. Tang, Semiquantum secret sharing using
  two-particle entangled state, Int. J. Quantum Inf. 10~(05) (2012) 1250050.

\bibitem{Lvzhou-1118}
L.~Li, D.~Qiu, Quantum secret sharing with classical bobs, J. Phys. A: Math.
  Theor. 46~(4) (2013) 045304.

\bibitem{XieLi-1120}
C.~Xie, L.~Li, D.~Qiu, A novel semi-quantum secret sharing scheme of specific
  bits, Int. J. Theor. Phys. 54~(10) (2015) 3819--3824.

\bibitem{NieLi-1130}
Y.-Y. Nie, Y.-H. Li, Z.-S. Wang, Semi-quantum information splitting using
  ghz-type states, Quantum Inf. Process. 12~(1) (2013) 437--448.

\bibitem{ZouQiu-1135}
X.~Zou, D.~Qiu, Three-step semiquantum secure direct communication protocol,
  Sci. China Ser. G: Phys. Mech. Astron. 57~(9) (2014) 1696 -- 1702.

\bibitem{Walter-1149}
O.~K. Walter, Security proof of a semi-quantum key distribution protocol, in:
  2015 IEEE International Symposium on Information Theory (ISIT), 2015, pp. 686--690.

\bibitem{LoChau-1148}
H.-K. Lo, H.~F. Chau, M.~Ardehali, Efficient quantum key distribution scheme
  and a proof of its unconditional security, J. Cryptol. 18~(2) (2005)
  133--165.

\bibitem{BoyerMor-1147}
M.~Boyer, T.~Mor, Comment on ``semiquantum-key distribution using less than
  four quantum states'', Phys. Rev. A 83~(4) (2011) 046301.

\bibitem{ZouQiu-1146}
X.~Zou, D.~Qiu, Reply to ``comment on `semiquantum-key distribution using less
  than four quantum states' '', Phys. Rev. A 83~(4) (2011) 046302.

\end{thebibliography}
\end{document}